\documentclass[]{amsart}

\usepackage{cite}

\usepackage{amsmath}
\usepackage{amsfonts}
\usepackage{amssymb}
\usepackage{amsthm}
\usepackage{braket}

\usepackage{tikz}
\usetikzlibrary{positioning}
\usetikzlibrary{arrows}

\usepackage{rotating}

\newtheorem{theorem}{Theorem}

\newtheorem{corollary}[theorem]{Corollary}
\newtheorem{definition}[theorem]{Definition}

\newtheorem{lemma}[theorem]{Lemma}
\numberwithin{theorem}{section}

\theoremstyle{remark}
\newtheorem{remark}[theorem]{Remark}
\newtheorem{example}[theorem]{Example}

\DeclareMathOperator{\Span}{Span}

\def\ot{\otimes}
\def\V{\mathcal{V}}
\def\be{\mathbf{e}}


\def\picA{\tikz[baseline=4ex,scale=0.50]{
	\draw (0,0) rectangle (4.5,2.5);
	\foreach \x in {0,...,4}
	\foreach \y in {0,3}
	\draw (\x+0.25,\y-0.5) -- (\x+0.25,\y);
	\foreach \x in {0,5}
	\foreach \y in {0,...,2}
	\draw (\x-0.5,\y+0.25) -- (\x,\y+0.25);
	\node at (2.25, 1.25)    {$T$};}
}

\def\picB{\tikz[baseline=4ex,scale=0.50]{
		\draw (0,0) rectangle (4.5,2.5);
		\foreach \x in {0,...,3}
		\foreach \y in {0,3}
		\draw (\x+0.25,\y-0.5) -- (\x+0.25,\y);
		\foreach \x in {0,5}
		\foreach \y in {0,...,2}
		\draw (\x-0.5,\y+0.25) -- (\x,\y+0.25);
	    \node at (2.25, 1.25)    {$T_1$};}
}

\def\picC{\tikz[baseline=4ex,scale=0.50]{
		\draw (0,0) rectangle (4.5,2.5);
		\foreach \y in {0,3}
		\draw (4.25,\y-0.5) -- (4.25,\y);
	    \node at (2.25, 1.25)    {$T_2$};}
}

\def\picD{\tikz[baseline=4ex,scale=0.50]{
		\foreach \x in {0,...,3}
		\foreach \y in {0,...,2}
		\draw (\x,\y) rectangle (\x+0.5,\y+0.5);
		\foreach \x in {0,...,3}
		\foreach \y in {0,...,3}
		\draw (\x+0.25,\y-0.5) -- (\x+0.25,\y);
		\foreach \x in {0,...,5}
		\foreach \y in {0,...,2}
		\draw (\x-0.5,\y+0.25) -- (\x,\y+0.25);
		\foreach \y in {0,...,2}
		\draw (4,\y+0.25) -- (4.5,\y+0.25);}
}

\def\picE{\tikz[baseline=4ex,scale=0.50]{
		\draw (1,0) rectangle (4.5,2.5);
		\foreach \y in {0,...,2}
		\draw (0,\y) rectangle (0.5,\y+0.5);
		\foreach \x in {0,...,3}
		\foreach \y in {0,3}
		\draw (\x+0.25,\y-0.5) -- (\x+0.25,\y);
		\foreach \x in {0,1,5}
		\foreach \y in {0,...,2}
		\draw (\x-0.5,\y+0.25) -- (\x,\y+0.25);
		\foreach \y in {1,2}
		\draw (0.25,\y-0.5) -- (0.25,\y);
		\node at (2.75, 1.25)    {$B_{v_1}$};}
	}

\def\picF{\tikz[baseline=4ex,scale=0.50]{
		\draw (1,0) rectangle (4.5,2.5);
		\foreach \y in {0,...,2}
		\draw (0,\y) rectangle (0.5,\y+0.5);
		\foreach \x in {0,...,4}
		\foreach \y in {0,3}
		\draw (\x+0.25,\y-0.5) -- (\x+0.25,\y);
		\foreach \x in {0,1,5}
		\foreach \y in {0,...,2}
		\draw (\x-0.5,\y+0.25) -- (\x,\y+0.25);
		\foreach \y in {1,2}
		\draw (0.25,\y-0.5) -- (0.25,\y);
		\node at (2.75, 1.25)    {$B_{v_1}\otimes T_2$};}
}

\def\picG{\tikz[baseline=4ex,scale=0.50]{
		\foreach \x in {0,...,4}
		\foreach \y in {0,...,2}
		\draw (\x,\y) rectangle (\x+0.5,\y+0.5);
		\foreach \x in {0,...,4}
		\foreach \y in {0,...,3}
		\draw (\x+0.25,\y-0.5) -- (\x+0.25,\y);
		\foreach \x in {0,...,5}
		\foreach \y in {0,...,2}
		\draw (\x-0.5,\y+0.25) -- (\x,\y+0.25);}
	}

\def\picH{\tikz[baseline=4ex,scale=0.50]{
		\foreach \x in {0,1}
		\foreach \y in {0}
		\draw (\x,\y) rectangle (\x+0.5,\y+0.5);
		\foreach \x in {0,1,2}
		\foreach \y in {0}
		\draw (\x-0.5,\y+0.25) -- (\x,\y+0.25);}
}

\title{A tensor version of the quantum Wielandt theorem}
\author{Mateusz Micha{\l}ek}
\address{Max Planck Institute for Mathematics in the Sciences\\ 
Leipzig, Germany\\
and\\
Polish Academy of Sciences, 
Institute of Mathematics\\ 
Warsaw, Poland\\
and\\
Department of Mathematics and Systems Analysis at Aalto University, Espoo, Finland}
\email{michalek@mis.mpg.de}
\thanks{MM was supported by Polish National Science Center project 2013/08/A/ST1/00804 affiliated at the University of Warsaw.}
\author{Tim Seynnaeve}
\address{Max Planck Institute for Mathematics in the Sciences\\ 
	Leipzig, Germany}
\email{tim.seynnaeve@mis.mpg.de}
\author{Frank Verstraete}
\address{Quantum Optics, Quantum Nanophysics and Quantum Information, Faculty of Physics\\ 
	University of Vienna, Vienna, Austria\\
and\\
Department of Physics and Astronomy\\
Ghent University, Ghent, Belgium}
\email{Frank.Verstraete@UGent.be}

\subjclass[2010]{15A69, 82B20}

\begin{document}


\begin{abstract}
	We prove boundedness results for the injectivity regions for PEPS. Our result is a higher-dimensional generalization of the quantum Wielandt inequality. 
\end{abstract}

\maketitle

\section{Introduction}

In \cite{QuantumWielandt}, Sanz, Perez-Garcia, Wolf and Cirac proved a quantum version of the Wielandt inequality. This theorem was motivated by the study of Matrix Product States and conjectures stated in \cite{PerezGarciaVerstraete}. Quantum Wielandt theorem proved upper bounds on the support of parent Hamiltonians for injective matrix product states, which was the final piece missing for proving that the manifold of MPS is in one to one correspondence with ground states of local Hamiltonians \cite{fannes1992finitely}.\\ 
In mathematical terms, the quantum Wielandt theorem can be stated as follows:\\ 
Let $\mathcal{A}=(A_1,\ldots,A_d)$ be a $d$-tuple of $D \times D$-matrices, and assume that there is an $N$ such that the linear span of $\{A_{i_1}\cdots A_{i_N} | 1 \leq i_j \leq d\}$ equals the space of $D \times D$-matrices. Then already for $N=C(D,d):=(D^2-d+1)D^2$, the linear span of $\{A_{i_1}\cdots A_{i_N} | 1 \leq i_j \leq d\}$ equals the space of $D \times D$-matrices.
The bound $C(D,d)$  was recently improved to $O(D^2\log D)$ in \cite{MichalekShitov} and is conjectured to be $O(D^2)$ \cite{PerezGarciaVerstraete}. \\ 
The 2-dimensional generalizations of MPS are called projected entangled pair states (PEPS), and play a central role in the classification of the different quantum phases of spin systems defined on two-dimensional grids. PEPS are much more complex than MPS: just as MPS can be understood in terms of completely positive maps on matrices, PEPS deal with completely positive maps on tensors, for which no analogues of eigenvalue and singular value decompositions exist. It has been a long standing open question in the field of quantum tensor networks whether an analogue of the quantum Wielandt theorem exists for PEPS, which is the missing piece in proving that every PEPS has a parent Hamiltonian with finite support. This paper proves the existence of such a theorem, albeit in a weaker form than for MPS as the upper bound in nonconstructive. In physics terms, it is proven that the notion of injectivity for PEPS is well defined, in the sense that there is only a finite amount of blocking needed for the map from the virtual to the physical indices to become injective.\\
The bounds for quantum Wielandt theorem in \cite{QuantumWielandt, MichalekShitov, rahaman2018new} were obtained using explicit methods from linear algebra.
Our main new insight is the application of nonconstructive Noetherian arguments from non-linear algebra.
For matrix product states, this gives an easy proof of Conjecture 1 in \cite{PerezGarciaVerstraete}.

\section{PEPS and injective regions}
By a \emph{grid} we mean a triple $G=(\V,E_I,E_O)$ where $\V,E_I,E_O$ are finite sets, respectively called vertices, inner edges and outgoing edges. We will write $E=E_I \cup E_O$. One can think of a grid as (a part of) a system of particles, where the vertices are the particles and the edges indicate interaction between the particles. Every inner edge $e$ may be identified with a two element subset $\{v_1,v_2\}$ of $\V$. Every outgoing edge distinguishes one element of $\V$, which we call its endpoint. For $v$ a vertex, $e(v)$ will denote the set of edges incident with that vertex. 
Let $G$ be a grid. To every edge $e$ of $G$ we associate a finite-dimensional $\mathbb{C}$-vector space $V_e$, equipped with an inner product.
 To every vertex $v$ of $G$, we associate two $\mathbb{C}$-vector spaces: the \emph{virtual space} $V_v:=\bigotimes_{e \in e(v)} V_e$, and the \emph{physical space} $W_v$, which represents the physical state of the particle.\\
  Let $(v \mapsto A_v)_{v \in G}$ be a function that assigns to every vertex $v$ of $G$ a tensor $A_v \in V_v$. We then obtain a new tensor in $\bigotimes_{e\in E_O}V_e$ by contracting along all inner edges of $G$. This tensor will be denoted by $\mathcal{C}[(v \mapsto A_v)_{v \in G}]$, or simply by $\mathcal{C}[v \mapsto A_v]$.
 \begin{example}
 	Multiplication of matrices.
 	Let $G$ be the following grid: \vskip-1cm
 	\[ \picH \]
 	with vertices $1$ and $2$.
 	Then $\mathcal{C}[(v \mapsto A_v)_{v \in G}]$ is simply the matrix product $A_1A_2$.

 \end{example}
\noindent If we assign to every vertex $v$ a linear map $\phi_v: V_v \to W_v$, we obtain a linear map
 \[
 \bigotimes_{e \in E_O}{V_e} \to \bigotimes_{e \in E_O}{V_e} \otimes  \bigotimes_{e \in E_I}{(V_e \otimes V_e)} \cong \bigotimes_{v\in \V}{V_v} \to \bigotimes_{v\in \V}{W_v}
 \]
 where the first map is induced by the inner product on $V_e$. If our grid has no outgoing edges, this corresponds to a physical state of our system. Any state that arises in this way is called a \emph{projected entangled pair state (PEPS)}.\\
 We now give a description in coordinates: write $d_v= \dim W_v$, $D_e = \dim V_e$, and suppose $\phi_v$ is given by
 \[
 \sum_{i=1}^{d_v}\sum_{\boldsymbol{j}}{A^{(v)}_{i,\boldsymbol{j}} \ket{i}\bra{\boldsymbol{j}}}
 \]
 where the second sum is over all tuples $\boldsymbol{j}=(j_e)_{e \in e(v)}$, $1 \leq j_e \leq D_e$. Then our map is 
 \[
\sum_{i_1,i_2,\ldots =1}^{d}{\mathcal{C}[v \mapsto A^{(v)}_{i_v}]}^*\ket{i_1,i_2,\ldots} \text{.}
 \]
 \begin{definition}
 If the above map is injective, we say that $(G,\{\phi_v\}_{v \in \V})$ is an \emph{injective region}. Equivalently, $(G,\{\phi_v\}_{v \in \V})$ is an injective region if and only if the tensors ${\mathcal{C}[v \mapsto A^{(v)}_{i_v}]}$ span the whole space $\bigotimes_{e \in E_O}{V_e}$.
 \end{definition}

 
\section{Main theorem}
We fix natural numbers $n$ (grid dimension), $D$ (bond dimension), $d$ (physical dimension). \\
For $N_1,\ldots,N_n \in \mathbb{N}$, we let $G=G(N_1,\ldots,N_n)$ be the $n$-dimensional square grid of size $N_1 \times \ldots \times N_n$. In particular, every vertex has degree $2n$. 
The grid $G(3,5)$ is presented below:
$$\picG .$$
We will denote the outgoing edges of $G$ by $(\boldsymbol{j},\pm \be_i)$, where $\boldsymbol{j}$ is a vertex on the boundary of the grid, and $\pm \be_i$ indicates the direction of the outgoing edge.\\
To every edge $e$ we associate the vector space $V_e = V = \mathbb{C}^D$. We stress that $D$ is now the same for every edge. 
Now we can identify all virtual spaces $V_v=\bigotimes_{e\in e(v)} V_e=(\mathbb{C}^D)^{\ot \deg(v)}=(\mathbb{C}^D)^{\ot 2n}$ in the obvious way: the tensor factor of $(\mathbb{C}^D)^{\ot \deg(v)}$ associated to an edge out of $v$ will be identified with the tensor factor of $(\mathbb{C}^D)^{\ot \deg(w)}$ associated to the edge out of $w$ pointing in the same direction.
We also identify all physical spaces $W_v$ with a fixed $d$-dimensional vector space.\\
Having done these identifications, we can now associate the same linear map  $\phi_{\mathcal{A}} = \sum_{i=1}^{d}\sum_{\boldsymbol{j}}{A_{i,\boldsymbol{j}} \ket{i}\bra{\boldsymbol{j}}}$, where $\mathcal{A} = (A_1,\ldots,A_d)$ be a collection of $d$ tensors $A_i \in (\mathbb{C}^D)^{\otimes2n}$. From now on, we will assume we have a fixed $\mathcal{A}$, and let the size of the grid $G=G(N_1,\ldots,N_n)$ vary.
\begin{definition}
	We say that \emph{$G$ is an injective region for $\mathcal{A}$} if $(G,\{\phi_{\mathcal{A}}\}_{v \in \V})$ in an injective region.
	Explicitly, $G$ is an injective region for $\mathcal{A}$, if the tensors $\mathcal{C}[(v \mapsto A_{i_v})_{v \in G}]$ linearly span whole space $(\mathbb{C}^D)^{\ot E_O(G)}$, when we consider all possible ways of placing a tensor from $\mathcal{A}$ at every vertex of $G$.\\
If $\mathcal{A}$ has an injective region, we say that $\mathcal{A}$ is \emph{injective}.
\end{definition}
\begin{remark}\label{rem:subspace}
We note that being an injective region for $G$ and being injective are properties of the linear span of $\mathcal{A}$, not a particular choice of tensors $A_i$.
\end{remark}
In the following lemma we prove that being an injective region is stable under extension of the grid.
\begin{lemma} \label{lemma:biggergrid}
	Let $G_1 \subseteq G_2$ be $n$-dimensional square grids. If $G_1$ is an injective region for $\mathcal{A}$, then so is $G_2$.
\end{lemma}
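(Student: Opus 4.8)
The plan is to reduce to the case where $G_2$ is obtained from $G_1$ by appending a single layer. Since $G_1=G(N_1,\dots,N_n)\subseteq G_2=G(M_1,\dots,M_n)$ with $N_i\le M_i$, I can interpolate by a chain of square grids each of which enlarges one coordinate by one; by Lemma-style induction it then suffices to treat $G_2=G(N_1+1,N_2,\dots,N_n)$ (extension in direction $\be_1$, the other directions being symmetric). Write $G_2=G_1\sqcup H$, where $H$ is the new hyperplane layer at $\be_1$-coordinate $N_1+1$, and let $\mathcal{C}_{G_1}$, $\mathcal{C}_H$ denote the contractions of these sub-grids over their inner edges (for a given placement of tensors from $\mathcal{A}$). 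Contracting $G_1$ and $H$ separately gives $\mathcal{C}[v\mapsto A_{i_v}]=\operatorname{contract}_R(\mathcal{C}_{G_1}\ot \mathcal{C}_H)$, where $R$ is the set of interface edges (the $\be_1$-edges joining the last layer of $G_1$ to $H$, i.e.\ outgoing edges of $G_1$ that become inner edges of $G_2$), $S:=E_O(G_1)\setminus R$ and $P:=E_O(H)\setminus R$ are the surviving and the newly created outgoing edges, and $E_O(G_2)=S\sqcup P$.

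Next I would use injectivity of $G_1$ to strip away the $G_1$-part. Suppose a functional $F$ on $(\CC^D)^{\ot E_O(G_2)}$ annihilates every tensor $\mathcal{C}[v\mapsto A_{i_v}]$. Contracting $F$ against $\mathcal{C}_H$ along the $P$-legs rewrites the pairing as $\langle \mathcal{C}_{G_1},\Theta\rangle$ for a functional $\Theta=\Theta(F,\mathcal{C}_H)$ on $(\CC^D)^{\ot E_O(G_1)}$ depending on the placement on $H$; since $G_1$ is injective, the vanishing of $\langle \mathcal{C}_{G_1},\Theta\rangle$ for all placements forces $\Theta=0$ for every placement on $H$. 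Expanding $F=\sum_k \sigma_k\ot\psi_k$ with the $\sigma_k$ on $S$ linearly independent and the $\psi_k$ on $P$, the identity $\Theta=0$ separates into $\operatorname{contract}_P(\psi_k\ot\mathcal{C}_H)=0$ for each $k$ and every placement on $H$. Hence $G_2$ can fail to be injective only if there is a nonzero $\psi$ on $(\CC^D)^{\ot P}$ with $\operatorname{contract}_P(\psi\ot\mathcal{C}_H)=0$ for all placements on $H$; equivalently, I must show the tensors $\operatorname{contract}_R(r\ot\mathcal{C}_H)$ span $(\CC^D)^{\ot P}$.

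The hard part is ruling out such a degenerate $\psi$, and here I would invoke injectivity of $G_1$ a second time through the grid geometry. The key observation is that the new layer $H$ is isomorphic, as a tensor-network piece carrying tensors from $\mathcal{A}$, to the last layer $L$ of $G_1$ (both are the same $(n-1)$-dimensional box $N_2\times\cdots\times N_n$, with identical inner edges and identical $\pm\be_i$ legs). Under this isomorphism the $P$-legs of $H$ correspond to a set of legs $P_L$ of $L$ that are all outgoing edges of $G_1$ (the $+\be_1$ legs and the transverse side legs of $L$), while the $R$-legs of $H$ correspond to the $-\be_1$ legs of $L$, which are inner edges of $G_1$. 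Thus a putative $\psi$ with $\operatorname{contract}_P(\psi\ot\mathcal{C}_H)=0$ also satisfies $\operatorname{contract}_{P_L}(\psi\ot\mathcal{C}_L)=0$ for every placement on $L$. Writing $\mathcal{C}_{G_1}=\operatorname{contract}_Q(\mathcal{C}_{G_1\setminus L}\ot\mathcal{C}_L)$, with $Q$ the $-\be_1$ interface inside $G_1$, and contracting $\psi$ into $\mathcal{C}_L$ first, I obtain $\langle \chi\ot\psi,\mathcal{C}_{G_1}\rangle=0$ for every placement and every $\chi$ on the remaining outgoing edges. Choosing $\chi\neq 0$ yields a nonzero functional annihilating all $\mathcal{C}_{G_1}$, contradicting injectivity of $G_1$. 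Hence no such $\psi$ exists, $G_2$ is injective, and the induction closes.

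I expect the main obstacle to be exactly the content of the last paragraph: realizing that ``surjectivity of the appended layer'' is not an independent requirement but is \emph{forced} by injectivity of $G_1$, via the identification of $H$ with the boundary layer $L$ whose relevant legs are already exposed as outgoing edges. Once this identification is in place, the remaining work — the bookkeeping of which edges lie in $S$, $R$, $P$, $Q$, and the degenerate cases $N_1=1$ or $H$ attaching along several edges — is routine.
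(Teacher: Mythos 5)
Your argument is correct, but it takes a genuinely different route from the paper's: it is essentially the dual of it. The paper argues primally: it reduces to adding one layer, takes an arbitrary rank-one tensor $T=T_1\ot T_2$ on $E_O(G_2)$, uses injectivity of $G_1$ to expand $T_1$, peels off the first layer of $G_1$ so that the remaining contracted block $B_{v_1}$ together with $T_2$ becomes a tensor on the outgoing edges of a translated copy of $G_1$ inside $G_2$, and applies injectivity of $G_1$ a second time to that shifted copy. You argue dually: an annihilating functional $F$ is killed by two applications of injectivity of $G_1$ --- first to force $\Theta=0$ and hence $\operatorname{contract}_P(\psi_k\ot\mathcal{C}_H)=0$, then, via the translation identifying the new layer $H$ with the boundary layer $L$ of $G_1$ (whose relevant legs are outgoing edges of $G_1$), to force each residual $\psi_k$ to vanish. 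Both proofs use injectivity of $G_1$ twice and both exploit translation-invariance of the setup (the same $\mathcal{A}$ at every vertex, with direction-preserving identification of the virtual spaces); your identification of $H$ with $L$ plays exactly the role of the paper's shifted copy of $G_1$. A small advantage of your version is that it handles the base case where $G_1$ is a single layer uniformly, whereas the paper treats $N_1=2$ separately by citing an external lemma on unions of injective regions; a small advantage of the paper's version is that it stays entirely on the ``spanning'' side, matching its pictorial calculus and avoiding the dualization bookkeeping ($S$, $R$, $P$, $Q$ and the inner-product identifications). Your last paragraph is indeed the crux and is sound as stated; just make sure to record explicitly that $\chi\ot\psi\neq 0$ when $\chi\neq 0$ and $\psi\neq 0$, and that the $-\be_1$ legs of $L$ may be outgoing rather than inner edges of $G_1$ in the single-layer case (the contraction identity you use holds either way).
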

\begin{proof}
	By induction, we may assume that $G_1=G(N_1-1,N_2,\ldots,N_n)$ and $G_2=G(N_1,N_2,\ldots,N_n)$.  If $N_1=2$ the statement is true, because $G_2$ is the union of two injective regions, cf \cite[Lemma 1]{PEPS}. Thus we assume $N_1>2$. 
	The vertices of $G_2$ will be identified with vectors $\boldsymbol{j} = (j_1,\ldots,j_n) \in \mathbb{N}^n$, with $1 \leq j_i \leq N_i$. Such a vertex is in $G_1$ if additionally $N_1 \leq N_1-1$.
	We need to show that every tensor $T \in V^{\ot E_O(G_2)}$ can be written as a linear combination of tensors of the form $\mathcal{C}[(\boldsymbol{j} \mapsto A_{i_{\boldsymbol{j}}})_{\boldsymbol{j}\in G_2}]$. In fact it is enough to show this for rank one tensors $T$, since every tensor is a linear combination of rank one tensors.\\
	We can identify $E_O(G_1)$ with a subset of $E_O(G_2)$ as follows: to an outgoing edge $(\boldsymbol{j},\pm \be_i)$ of $E_O(G_1)$, we associate $(\boldsymbol{j},\pm \be_i)$ if $\pm \be_i \neq +\be_1$, and $(\boldsymbol{j}+\be_1,+\be_1)$ if $\pm \be_i= +\be_1$. Assuming $T$ has rank one, we have $T=T_1 \ot T_2 \in V^{\ot E_O(G_1)} \ot V^{\ot r}$, where $r$ equals the cardinality of $E_O(G(N_2,\dots,N_n))$.\\
	By assumption we can write $T_1$ as a linear combination of tensors of the form $\mathcal{C}[(\boldsymbol{j} \mapsto A_{i_{\boldsymbol{j}}})_{\boldsymbol{j}\in G_1}]$. 
	Let $G_1'$ be the grid obtained from $G_1$ by contracting all inner edges among vertices $\boldsymbol{j}$ for which $j_1 >0$. In particular, all vertices with $j_1>0$ get identified to a vertex $v_2$ (resp.~$v_1$). 
	Then $T_1$ is in particular a linear combination of tensors of the form $\mathcal{C}[(v \mapsto B_v)_{v \in G'_1}]$, where $B_v=A_{i_v}$ if $v$ is one of the vertices that did not get contracted and $B_{v_1}$ are some special tensors. \\
Consider the tensors $B_{v_1}\otimes T_2\in V^{\otimes |E_O(G_1)|}$. By assumption each one is a linear combination of tensors of the form $\mathcal{C}[(\boldsymbol{j} \mapsto A_{k_{\boldsymbol{j}}})_{\boldsymbol{j}\in G_1}]$, where now we identified $G_1$ with the subgrid of $G_2$ consisting of all vertices $\boldsymbol{j}$ with $j_1 > 0$.
	
Thus, we see that $T$ is a combination $\mathcal{C}[(\boldsymbol{j} \mapsto A_{s_{\boldsymbol{j}}})_{\boldsymbol{j}\in G_2}]$
where $s$ may be identified with $i$ above for $\boldsymbol{j}$ such that $j_1=0$ and with $k$ when $j_1>0$. The pictorial representation of the proof can be found below, where to each small box some element of $\mathcal{A}$ is associated.
\end{proof}

\[
\picA = \picB \ot \picC
\]

\[
\picB=\picD=\picE
\]

\[
\picA = \picF = \picG
\]

Our main theorem says that if $\mathcal{A}$ is injective, then there exists an injective region of bounded size (where the bound only depends on our parameters $D,d,n$). More precisely:
\begin{theorem} \label{mainthm}
	Let $G_1\subset G_2 \subset \cdots \subset G_k \subset \cdots$ be a chain of $n$-dimensional grids. Then there exists a constant $C$ (depending on $D$ and $d$) such that the following holds:
	If $\mathcal{A} \in ((\mathbb{C}^D)^{\otimes2n})^d$ is chosen so that for some $k$, $G_k$ is an injective region for $\mathcal{A}$, then already $G_C$ is an injective region for $\mathcal{A}$.
\end{theorem}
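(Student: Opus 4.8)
The plan is to reformulate injectivity as a Zariski-closed condition and then invoke Noetherianity of a polynomial ring, as the abstract's mention of ``nonconstructive Noetherian arguments'' strongly suggests. First I would observe, using Remark~\ref{rem:subspace}, that whether $G_k$ is an injective region for $\mathcal{A}$ depends only on the linear span of $\mathcal{A}$, and more importantly that it is a condition on the tensors $A_i$ that can be expressed polynomially: the spanning tensors $\mathcal{C}[(v\mapsto A_{i_v})_{v\in G_k}]$ depend polynomially (in fact multilinearly) on the entries of the $A_i$, so the failure of $G_k$ to be an injective region---i.e. the condition that these tensors do \emph{not} span $(\mathbb{C}^D)^{\otimes E_O(G_k)}$---is the vanishing of all maximal minors of the matrix whose rows are these tensors. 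Thus, letting $Z_k \subseteq ((\mathbb{C}^D)^{\otimes 2n})^d$ denote the set of $\mathcal{A}$ for which $G_k$ is \emph{not} an injective region, each $Z_k$ is a Zariski-closed subset (an algebraic variety cut out by minors).

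The second key step is to note that the sets $Z_k$ form a descending chain $Z_1 \supseteq Z_2 \supseteq \cdots$. This is exactly the content of Lemma~\ref{lemma:biggergrid}: if $G_k$ is an injective region for $\mathcal{A}$, then so is $G_{k+1}$ (since $G_k \subset G_{k+1}$), so the \emph{complement} of injectivity can only shrink as $k$ grows. Hence $\mathcal{A}\in Z_{k+1}\Rightarrow \mathcal{A}\in Z_k$, giving the inclusion $Z_{k+1}\subseteq Z_k$.

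Now I would apply the Noetherian property. Since $((\mathbb{C}^D)^{\otimes 2n})^d$ is an affine space with coordinate ring a Noetherian polynomial ring, its Zariski topology satisfies the descending chain condition on closed sets: the chain $Z_1\supseteq Z_2\supseteq\cdots$ must stabilize, so there is a constant $C$ with $Z_C = Z_{C+1} = \cdots = \bigcap_{k} Z_k$. To conclude, suppose $\mathcal{A}$ is such that $G_k$ is an injective region for some $k$; equivalently $\mathcal{A}\notin Z_k$, hence $\mathcal{A}\notin\bigcap_j Z_j = Z_C$, so $G_C$ is an injective region for $\mathcal{A}$. This $C$ depends only on the ambient variety, i.e. only on $D$, $d$, and $n$, as required.

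I expect the main subtlety---rather than a genuine obstacle---to be the precise algebraic formulation making $Z_k$ closed: one must set up the parametrized family so that the entries of the contracted tensors $\mathcal{C}[(v\mapsto A_{i_v})_{v\in G_k}]$ are genuinely polynomials in the coordinates of $\mathcal{A}$ (which they are, being contractions of tensor products), and then argue that non-spanning of a finite collection of vectors whose coordinates are polynomials is a closed condition (vanishing of all $r\times r$ minors, where $r=\dim (\mathbb{C}^D)^{\otimes E_O(G_k)}$). The beauty, and the reason the bound is nonconstructive, is that the descending chain condition gives no effective handle on \emph{where} the chain stabilizes; it only guarantees that some finite $C$ exists.
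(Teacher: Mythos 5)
Your proposal is correct and follows essentially the same route as the paper's own proof: the sets you call $Z_k$ are exactly the paper's $V_{G_k}$, shown Zariski-closed via vanishing of maximal minors of a matrix depending polynomially on $\mathcal{A}$, with the descending chain $Z_1 \supseteq Z_2 \supseteq \cdots$ coming from Lemma~\ref{lemma:biggergrid} and stabilization from Noetherianity (the paper cites the Hilbert Basis Theorem).
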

\begin{proof}
	For any grid $G$ and $\mathcal{A} \in ((\mathbb{C}^D)^{\otimes2n})^d$ , we write $S_G(\mathcal{A}) := \{\mathcal{C}[(v \mapsto A_{i_v})_{v \in G}]\} \subseteq (\mathbb{C}^D)^{\ot E_O(G)}$, and $V_G := \{\mathcal{A} \in ((\mathbb{C}^D)^{\otimes2n})^d | \Span(S_G(\mathcal{A}))  \subsetneq (\mathbb{C}^D)^{\ot E_O(G)} \} $. Thus, $G$ is an injective region for $\mathcal{A}$ if and only if $\Span(S_G(\mathcal{A})) = (\mathbb{C}^D)^{\ot E_O(G)}$ if and only if $\mathcal{A} \notin V_G$.\\
	By Lemma \ref{lemma:biggergrid}, it holds that $V_{G_1} \supseteq V_{G_2} \supseteq  \cdots \supseteq V_{G_k}  \supseteq \cdots$
	We need to to show that this chain eventually stabilizes. We will show that every $V_{G_k}$ is a Zariski closed subset of  $((\mathbb{C}^D)^{\otimes2n})^d$, i.e.\ that is is the zero locus of a system of polynomials. This will finish the proof by Hilbert Basis Theorem. \\ 
	Fix a grid $G=G_k$. For every $\mathcal{A} \in ((\mathbb{C}^D)^{\otimes2n})^d$, we can build a $D^{E_O(G)} \times d^{|\mathcal{V}(G)|}$ matrix $M_\mathcal{A}$ whose entries are the coefficients of the elements of $S_G(\mathcal{A})$. The condition $\Span(S_G(\mathcal{A}))  \subsetneq (\mathbb{C}^D)^{\ot E_O(G)}$ is equivalent to $M_{\mathcal{A}}$ having rank smaller than $D^{E_O(G)}$. The entries of $M_{\mathcal{A}}$ are polynomials in the entries of $\mathcal{A}$. Hence, the condition $\mathcal{A} \in V_G$ can be expressed as the vanishing of certain polynomials ($D^{E_O(G)}$-minors of $M_\mathcal{A}$) in the entries of $\mathcal{A}$. Hence, $V_G$ is a Zariski closed subset of  $((\mathbb{C}^D)^{\otimes2n})^d$.
\end{proof}

Theorem \ref{mainthm} can be reformulated as follows:
\begin{theorem} \label{mainthm2}
	There exists a finite collection of grids $G_1,\ldots,G_M$ (depending on $n,D,d$) such that the following holds: \\
	If $\mathcal{A} \in ((\mathbb{C}^D)^{\otimes2n})^d$ is injective, then one of the $G_i$ is an injective region for $\mathcal{A}$.
\end{theorem}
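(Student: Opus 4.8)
The plan is to deduce this reformulation directly from Theorem \ref{mainthm} together with Lemma \ref{lemma:biggergrid}. The key observation is that a single, canonical chain of grids already ``exhausts'' all possible injective regions. Concretely, I would consider the diagonal chain of cubes $G(1,\ldots,1) \subset G(2,\ldots,2) \subset \cdots$, where every grid $G(k,\ldots,k)$ is strictly contained in the next. Since any $n$-dimensional square grid $G(N_1,\ldots,N_n)$ is contained in the cube $G(k,\ldots,k)$ with $k=\max_i N_i$, Lemma \ref{lemma:biggergrid} guarantees that if $\mathcal{A}$ is injective at all --- i.e.\ if some grid is an injective region for $\mathcal{A}$ --- then some cube $G(k,\ldots,k)$ is already an injective region for $\mathcal{A}$.

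Next I would apply Theorem \ref{mainthm} to this diagonal chain. Since the chain is determined solely by $n$, the resulting constant $C$ depends only on $n, D, d$, and the theorem yields: whenever some cube $G(k,\ldots,k)$ is an injective region for $\mathcal{A}$, already $G(C,\ldots,C)$ is. Combining the two steps, for every injective $\mathcal{A}$ some cube is an injective region, hence the \emph{fixed} cube $G(C,\ldots,C)$ is an injective region. Thus one can take the finite collection to consist of the single grid $G_1 := G(C,\ldots,C)$, which proves the statement (with $M=1$).

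Alternatively --- and this is the version that most transparently explains the ``finite collection'' phrasing --- I could argue purely from the geometry established in the proof of Theorem \ref{mainthm}. There it is shown that each $V_G$ is a Zariski closed subset of the affine space $((\mathbb{C}^D)^{\otimes 2n})^d$. By definition, $\mathcal{A}$ fails to be injective precisely when it lies in $V_G$ for every grid $G$, i.e.\ in the intersection $\bigcap_G V_G$ taken over all grids. Since affine space is a Noetherian topological space, an arbitrary intersection of closed sets coincides with a finite subintersection; choosing grids $G_1,\ldots,G_M$ realizing such a finite subintersection gives $\bigcap_{i=1}^M V_{G_i} = \bigcap_G V_G$. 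Consequently $\mathcal{A}$ is injective if and only if $\mathcal{A} \notin V_{G_i}$ for some $i$, i.e.\ if and only if one of the $G_i$ is an injective region for $\mathcal{A}$.

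I expect no serious obstacle, since all the substantive work (closedness of $V_G$, Noetherianity, stability under grid enlargement) is already in place. The only point requiring care is the exhaustion step: one must verify via Lemma \ref{lemma:biggergrid} that the diagonal chain of cubes is cofinal among all grids, so that testing injectivity along this one chain loses no generality. Once that is granted, the reduction to a single grid (or, in the second approach, to a finite subintersection) is immediate.
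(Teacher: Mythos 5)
Your proposal is correct, and in fact both of your arguments go through; but neither is the route the paper takes. The paper deduces Theorem \ref{mainthm2} from Theorem \ref{mainthm} via the combinatorial Lemma \ref{lemma:eq}: it identifies the square grids with $\mathbb{N}^n$, applies the lemma to the order-reversing map $G \mapsto V_G$ (Lemma \ref{lemma:biggergrid} gives monotonicity, Theorem \ref{mainthm} gives stabilization along chains), and obtains a finite set $B$ of grids with the extra property that for \emph{every} grid $G$ there is some $G_i \in B$ with $G_i \subseteq G$ and $V_{G_i} = V_G$ --- i.e.\ the witnessing grid can be taken \emph{inside} any given injective region, which is relevant for locality of parent Hamiltonians even though it is not part of the statement. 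Your first argument instead exploits that the diagonal chain of cubes $G(k,\ldots,k)$ is cofinal among all square grids (via Lemma \ref{lemma:biggergrid}), and it actually yields the stronger conclusion that a \emph{single} grid $G(C,\ldots,C)$ suffices, i.e.\ $M=1$; the trade-off is that this cube may be much larger than the smallest injective region for a given $\mathcal{A}$. Your second argument is even leaner: it bypasses Theorem \ref{mainthm} and Lemma \ref{lemma:biggergrid} entirely, using only the Zariski-closedness of each $V_G$ (established inside the proof of Theorem \ref{mainthm}) together with the fact that in a Noetherian space an arbitrary intersection of closed sets equals a finite subintersection. Both versions are complete proofs of the statement as written; the only care needed, as you note, is the cofinality step in the first version, which is exactly what Lemma \ref{lemma:biggergrid} provides.
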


The equivalence of Theorem \ref{mainthm} and Theorem \ref{mainthm2} follows from the following general lemma:
\begin{lemma} \label{lemma:eq}
	Let $\mathcal{P}$ be a partially ordered set. We consider $\mathbb{N}^n$ with the coordinatewise partial order. Let $f: \mathbb{N}^n \to \mathcal{P}$ be a map such a that 
	\begin{enumerate}
		\item $\boldsymbol{a}_1 \leq \boldsymbol{a}_2 \implies f(\boldsymbol{a}_1) \geq f(\boldsymbol{a}_2)$.
		\item For every chain $\boldsymbol{a}_1 < \boldsymbol{a}_2 < \ldots$ in $\mathbb{N}^n$, the chain $f(\boldsymbol{a}_1) \geq f(\boldsymbol{a}_2) \geq \ldots$ stabilizes after finitely many steps. 
	\end{enumerate}
	Then there is a finite subset $B$ of $\mathbb{N}^n$ such that for any $\boldsymbol{a} \in \mathbb{N}^n$, there is a $\boldsymbol{b} \in B$ with $\boldsymbol{a} \geq \boldsymbol{b}$ and $f(\boldsymbol{a})=f(\boldsymbol{b})$.
\end{lemma}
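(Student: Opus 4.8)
The plan is to exhibit the required finite set $B$ explicitly as the collection of points that are \emph{minimal for their own $f$-value}, and then establish separately that this set covers and that it is finite. Concretely, I would set
\[
B := \{\boldsymbol{b} \in \mathbb{N}^n : f(\boldsymbol{b}') \neq f(\boldsymbol{b}) \text{ for every } \boldsymbol{b}' < \boldsymbol{b}\}.
\]
The two things to prove are: (i) every $\boldsymbol{a}$ is dominated by some $\boldsymbol{b} \in B$ with the same $f$-value, and (ii) $B$ is finite.

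For the covering property (i), I would fix an arbitrary $\boldsymbol{a} \in \mathbb{N}^n$ and consider the set $S_{\boldsymbol{a}} = \{\boldsymbol{b} \leq \boldsymbol{a} : f(\boldsymbol{b}) = f(\boldsymbol{a})\}$, which is nonempty since it contains $\boldsymbol{a}$. Because $(\mathbb{N}^n, \leq)$ has no infinite strictly descending chains, $S_{\boldsymbol{a}}$ has a minimal element $\boldsymbol{b}_0$. I would then check that $\boldsymbol{b}_0 \in B$: if some $\boldsymbol{b}' < \boldsymbol{b}_0$ satisfied $f(\boldsymbol{b}') = f(\boldsymbol{b}_0) = f(\boldsymbol{a})$, then $\boldsymbol{b}' \leq \boldsymbol{a}$ would place $\boldsymbol{b}'$ in $S_{\boldsymbol{a}}$ strictly below $\boldsymbol{b}_0$, contradicting minimality. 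Hence $\boldsymbol{b}_0 \in B$ with $\boldsymbol{b}_0 \leq \boldsymbol{a}$ and $f(\boldsymbol{b}_0) = f(\boldsymbol{a})$. This step uses only well-foundedness of the grid order, not hypotheses (1) or (2).

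The crux is finiteness (ii), and this is where hypotheses (1) and (2) enter. I would argue by contradiction: if $B$ were infinite, then since $(\mathbb{N}^n, \leq)$ is a well-quasi-order (Dickson's lemma), the infinitely many distinct points of $B$ would admit an infinite strictly increasing subsequence $\boldsymbol{b}_1 < \boldsymbol{b}_2 < \cdots$. By (1) this yields a descending chain $f(\boldsymbol{b}_1) \geq f(\boldsymbol{b}_2) \geq \cdots$ in $\mathcal{P}$, which by (2) stabilizes; in particular $f(\boldsymbol{b}_i) = f(\boldsymbol{b}_{i+1})$ for some $i$. But $\boldsymbol{b}_i < \boldsymbol{b}_{i+1}$ with equal $f$-values contradicts $\boldsymbol{b}_{i+1} \in B$. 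Thus $B$ is finite, and taking this $B$ completes the proof.

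I expect the finiteness step to be the main obstacle. The subtlety to watch is that $B$ is in general \emph{not} an antichain: comparable points carrying distinct $f$-values may both be $f$-minimal, so applying Dickson's lemma naively to the minimal elements of $B$ does not bound $B$ itself. The essential input is instead the full well-quasi-order property of $\mathbb{N}^n$ — namely that every infinite sequence of distinct points contains an infinite strictly increasing subsequence — which converts the global infinitude of $B$ into a single increasing chain on which hypotheses (1) and (2) can be applied to force a repetition of $f$-values.
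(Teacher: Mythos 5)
Your proof is correct, but it takes a genuinely different route from the paper's. The paper argues by induction on $n$: it first produces a single point $\boldsymbol{b}_0$ above which $f$ is constant (if no such point existed one could greedily build an infinite strictly increasing chain on which $f$ strictly decreases, contradicting hypothesis (2)), and then covers the complement $\{\boldsymbol{a} : \boldsymbol{a} \ngeq \boldsymbol{b}_0\}$, a finite union of copies of $\mathbb{N}^{n-1}$, by the induction hypothesis. You instead take $B$ to be the set of all points minimal for their own $f$-value, obtain the covering property from well-foundedness alone, and obtain finiteness from the strong form of the well-quasi-order property of $\mathbb{N}^n$ (every infinite sequence of distinct points contains an infinite strictly increasing subsequence) together with (1) and (2). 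Both arguments are valid. Your version buys a canonical answer: since an $f$-minimal point can only be covered by itself, your $B$ is contained in every set satisfying the conclusion, hence is the unique minimal such set; and the argument transfers verbatim to any well-quasi-order in place of $\mathbb{N}^n$. The price is the Ramsey-type input about ascending subsequences, which is strictly more than the bare "finitely many minimal elements" form of Dickson's lemma (though for $\mathbb{N}^n$ it has a short direct proof by successively extracting subsequences that are weakly increasing in each coordinate); the paper's dimension induction avoids this machinery and leans only on hypothesis (2) and the product structure of $\mathbb{N}^n$. Your closing caveat --- that $B$ need not be an antichain, so one cannot simply bound it by the minimal elements of some set --- is exactly the right subtlety to flag.
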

\begin{proof}
	We first claim that there is a $\boldsymbol{b}_0 \in \mathbb{N}^n$ such that $f(\boldsymbol{a})=f(\boldsymbol{b_0})$ for every $\boldsymbol{a} \geq \boldsymbol{b}_0$. Indeed, if there was no such $\boldsymbol{b}_0$ we could build an infinite chain $\boldsymbol{a}_1 < \boldsymbol{a}_2 < \ldots$ in $\mathbb{N}^n$ with $f(\boldsymbol{a}_1) > f(\boldsymbol{a}_2) > \ldots$\\
	Now we can proceed by induction on $n$: the subset $\{\boldsymbol{a} \in \mathbb{N}^n | \boldsymbol{a} \ngeq \boldsymbol{b} \}$ can be written as a finite union of hyperplanes, each of which can be identified with $\mathbb{N}^{n-1}$. By the induction hypothesis, in each such hyperplane $H \subset \mathbb{N}^n$ there is a finite subset $B_H \subset H$ such that for any  $\boldsymbol{a} \in H$, there is a $\boldsymbol{b} \in B_H$ with $\boldsymbol{a} \geq \boldsymbol{b}$ and $f(\boldsymbol{a})=f(\boldsymbol{b})$.\\
	We define $B$ as $b_0$ together with the union of all $B_H$.
\end{proof}
To deduce Theorem \ref{mainthm2} from Theorem \ref{mainthm}, we apply Lemma \ref{lemma:eq} by identifying $\mathbb{N}^n$ with the collection of $n$-dimensional grids and taking $\mathcal{P}$ to be the poset of subsets of $((\mathbb{C}^D)^{\otimes2n})^d$ ordered by inclusion.
We consider $f: G \mapsto V_G$, where $V_G$ was defined in the proof of Theorem \ref{mainthm}.

We note that the constants in Theorem \ref{mainthm} and \ref{mainthm2} can be chosen independent of $d$.
\begin{corollary}
For any $n$ and $D$	there exists a finite collection of grids $G_1,\ldots,G_M$ such that the following holds: \\
	For any $d$, if $\mathcal{A} \in ((\mathbb{C}^D)^{\otimes2n})^d$ is injective, then one of the $G_i$ is an injective region for $\mathcal{A}$.
\end{corollary}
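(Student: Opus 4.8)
The plan is to obtain the corollary by extracting the key observation already implicit in the proof of Theorem \ref{mainthm}: the constant $C$ there depends only on $D$ and $d$, and in fact, as noted just before the corollary, it can be chosen independent of $d$. First I would make this $d$-independence precise. Fix $n$ and $D$. For a grid $G$, the set $V_G \subseteq ((\mathbb{C}^D)^{\otimes 2n})^d$ was shown to be Zariski closed, and by Remark \ref{rem:subspace} the condition $\mathcal{A} \in V_G$ depends only on the linear span $\Span(A_1,\ldots,A_d)$. The essential point is that injectivity is a property of a \emph{subspace} $U \subseteq (\mathbb{C}^D)^{\otimes 2n}$, not of a tuple of a fixed length: whether $G$ is an injective region for $\mathcal{A}$ depends only on $U = \Span(\mathcal{A})$, and since $\dim U \leq D^{2n}$, every subspace arises as the span of some tuple of length at most $D^{2n}$.

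Concretely, I would reduce the problem to a universal value of $d$. Set $d_0 := D^{2n}$. Given any $d$ and any injective $\mathcal{A} = (A_1,\ldots,A_d)$, let $U = \Span(A_1,\ldots,A_d)$; choose a spanning tuple $\mathcal{A}' = (A_1',\ldots,A_{d_0}')$ of $U$ (padding with zeros if $\dim U < d_0$). By Remark \ref{rem:subspace}, a grid $G$ is an injective region for $\mathcal{A}$ if and only if it is an injective region for $\mathcal{A}'$, since both conditions depend only on $U$. Thus it suffices to find, for the fixed parameters $(n, D, d_0)$, a finite collection of grids $G_1,\ldots,G_M$ that works for all injective tuples of length $d_0$ — and this is exactly the content of Theorem \ref{mainthm2} applied with $d = d_0$. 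The same finite collection then works for arbitrary $d$ via the reduction above.

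The steps, in order, are: (i) record that injectivity of $\mathcal{A}$ for a grid $G$ depends only on $\Span(\mathcal{A})$ (Remark \ref{rem:subspace}); (ii) observe that every subspace of $(\mathbb{C}^D)^{\otimes 2n}$ is spanned by at most $d_0 = D^{2n}$ vectors; (iii) apply Theorem \ref{mainthm2} with the single parameter choice $d = d_0$ to obtain grids $G_1,\ldots,G_M$ depending only on $n$ and $D$; (iv) for general $d$, pass from $\mathcal{A}$ to a length-$d_0$ tuple $\mathcal{A}'$ with the same span and invoke (i) to conclude that one of the $G_i$ is an injective region for $\mathcal{A}$.

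I do not expect a genuine obstacle here, as the corollary is essentially a bookkeeping refinement. The one point requiring care is the reduction in step (iv): one must check that \emph{injectivity} of $\mathcal{A}$ (the hypothesis) transfers correctly to $\mathcal{A}'$, i.e.\ that $\mathcal{A}$ injective implies $\mathcal{A}'$ injective. This is immediate from (i), since $\mathcal{A}$ having \emph{some} injective region $G_k$ means $G_k$ is injective for $\Span(\mathcal{A}) = \Span(\mathcal{A}')$, hence for $\mathcal{A}'$. The only subtlety worth flagging is the degenerate case $\dim U < d_0$, where padding $\mathcal{A}'$ with zero tensors is harmless precisely because the span is unchanged.
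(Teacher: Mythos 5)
Your proof is correct and takes essentially the same approach as the paper: both reduce to the linear span via Remark \ref{rem:subspace}, use the bound $\dim\Span(\mathcal{A})\le D^{2n}$, and then invoke Theorem \ref{mainthm2} for a bounded value of $d$. The only cosmetic difference is that you pad every tuple to a single universal length $d_0=D^{2n}$, whereas the paper applies Theorem \ref{mainthm2} once for each possible dimension and takes the union of the resulting grid collections.
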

\begin{proof}
By Remark \ref{rem:subspace} it is enough to consider the subspaces $\langle \mathcal{A}\rangle\subset((\mathbb{C}^D)^{\otimes2n})$. In particular the dimension of the subspaces is bounded by $D^{2n}$ and for each fixed dimension we obtain a finite number of grids by Theorem \ref{mainthm2}.
\end{proof}
Further we have the following computational implication.
\begin{corollary}
There exists an algorithm to decide if $\mathcal{A}$ is injective.
\end{corollary}
For the case of dimension $n=1$ our result proves the Conjecture \cite[Conjecture 1]{PerezGarciaVerstraete}, whose effective version was proved in \cite{QuantumWielandt}.
\bibliography{PEPSbib}{}
\bibliographystyle{plain}
\end{document}